\numberwithin{equation}{section}
\newtheorem{theorem}{Theorem}
\definecolor{dark-green}{rgb}{0.1,0.4,0}
\definecolor{NiceBlue}{rgb}{0.30196,0.55294,0.57647}
\newcommand{\bea}{\begin{eqnarray}}
\newcommand{\eea}{\end{eqnarray}}
\newcommand\afterTocSpace{\bigskip\medskip}
\newcommand\afterTocRuleSpace{\bigskip\bigskip}
\newcommand{\dd}{\mathrm{d}}
\newcommand\emailfootnote[1]{%
  \begingroup
  \renewcommand\thefootnote{}\footnote{#1}%
  \addtocounter{footnote}{-1}%
  \endgroup
}
\theoremstyle{definition}
\begin{document} 
\begin{flushright}
  \footnotesize \color{dark-green}{UUITP-12/25}\\
  \normalsize
  \end{flushright}
\thispagestyle{empty}

\vspace*{1.5cm}
\begin{center}

{\bf {\LARGE Black Hole Singularities from \\ \vspace{5mm}Holographic Complexity}}

\begin{center}

\vspace{1cm}

{\bf Vyshnav Mohan}$^1$\emailfootnote{$^1$ \href{mailto:vyshnav.vijay.mohan@gmail.com}{vyshnav.vijay.mohan@gmail.com}}\\
 \bigskip \rm
  
\bigskip
\hspace{.05em}Science Institute,
University of Iceland \\Dunhaga 3, 107 Reykjav\'{i}k, Iceland.\\
\bigskip
\hspace{.05em} Institutionen för fysik och astronomi, Uppsala Universitet\\
Box 803, SE-751 08 Uppsala, Sweden
\bigskip

\rm
  \end{center}

\vspace{1.5cm}
{\bf Abstract}
\end{center}
\begin{quotation}
\noindent

Using a second law of complexity, we prove a black hole singularity theorem. By introducing the notion of trapped extremal surfaces, we show that their existence implies null geodesic incompleteness inside globally hyperbolic black holes. We also demonstrate that the vanishing of the growth rate of the volume of extremal surfaces provides a sharp diagnostic of the black hole singularity. In static, uncharged, spherically symmetric spacetimes, this corresponds to the growth rate of spacelike extremal surfaces going to zero at the singularity. In charged or rotating spacetimes, such as the Reissner-Nordström and Kerr black holes, we identify novel timelike extremal surfaces that exhibit the same behavior at the timelike singularity.
\end{quotation}

\setcounter{page}{0}
\setcounter{tocdepth}{2}
\setcounter{footnote}{0}
\newpage

\parskip 0.1in
 
\setcounter{page}{2}

\setcounter{tocdepth}{1}
\tableofcontents
\afterTocSpace
\hrule
\afterTocRuleSpace

\section{Introduction}
\label{introsec}
Penrose proved that a globally hyperbolic, inextendible spacetime manifold $M$ is null geodesically incomplete if it contains a trapped surface \cite{Penrose:1964wq}. A surface is future-trapped if the future-directed null geodesics orthogonal to the surface converge. The proof crucially assumes the null energy condition, but there exist states in relativistic quantum field theories that violate this condition \cite{Epstein:1965zza}. Wall \cite{Wall:2010jtc} showed that the energy condition can be given up in favour of the generalized second law, which states that the generalized entropy always increases. The generalized entropy of a codimension-two surface is defined as the following sum
\bea
S_\text{gen} = \frac{A}{4G_N \hbar} +S_{\text{out}}\,, \label{genentropyeq}
\eea
where $A$ is the area of the surface and $S_{\text{out}}$ is the entanglement entropy of the quantum fields in the exterior of the surface. The generalized second law reflects the thermal properties of the black hole. Since no counterexamples to the generalized second law are known, Wall's proof puts the singularity theorem on a firmer footing. Significant advancements in the last decade have further refined the singularity theorem, resulting in wider applicability (see \cite{Bousso:2025xyc} and references therein).

The study of quantum complexity has also led us to another version of the second law of black hole thermodynamics. The \emph{Second Law of Complexity} (SLC) states that if the complexity of a black hole is below its saturation value---which is of the exponential order in the Bekenstein-Hawking entropy---it will always increase \cite{Brown:2016wib,Brown:2017jil}. The complexity of black holes can be holographically realized through the ``Complexity=Volume'' (CV) prescription, where complexity is defined as the volume of boundary-anchored extremal co-dimension one surface passing through the interior of the black hole \cite{Susskind:2014rva,Stanford:2014jda}. Therefore, it is natural to ask whether we can use the CV prescription in conjunction with the second law of complexity to arrive at a singularity theorem. In this paper, we show that the answer is affirmative.

We begin by proposing a generalization of the second law of complexity: the volume of extremal surfaces anchored to a causal horizon increases as the anchor points are pushed to the future (see Section \ref{theoremsec} for more details). While we do not provide a proof, we take this statement as an assumption, supported by various physically relevant examples. The second law of complexity provides a powerful \emph{bulk} condition that can be used to establish a singularity theorem. The following heuristic picture explains the interplay between singularities and SLC. Consider a sheet of future-directed null rays in the interior of a black hole. Let us look at extremal volume surfaces confined to the region between these null rays. The volume of these surfaces decreases as we move it toward the singularity, as shown below:
\begin{figure}[h!]
  \centering
    \hspace{4.2cm}\includegraphics[width=0.5\linewidth]{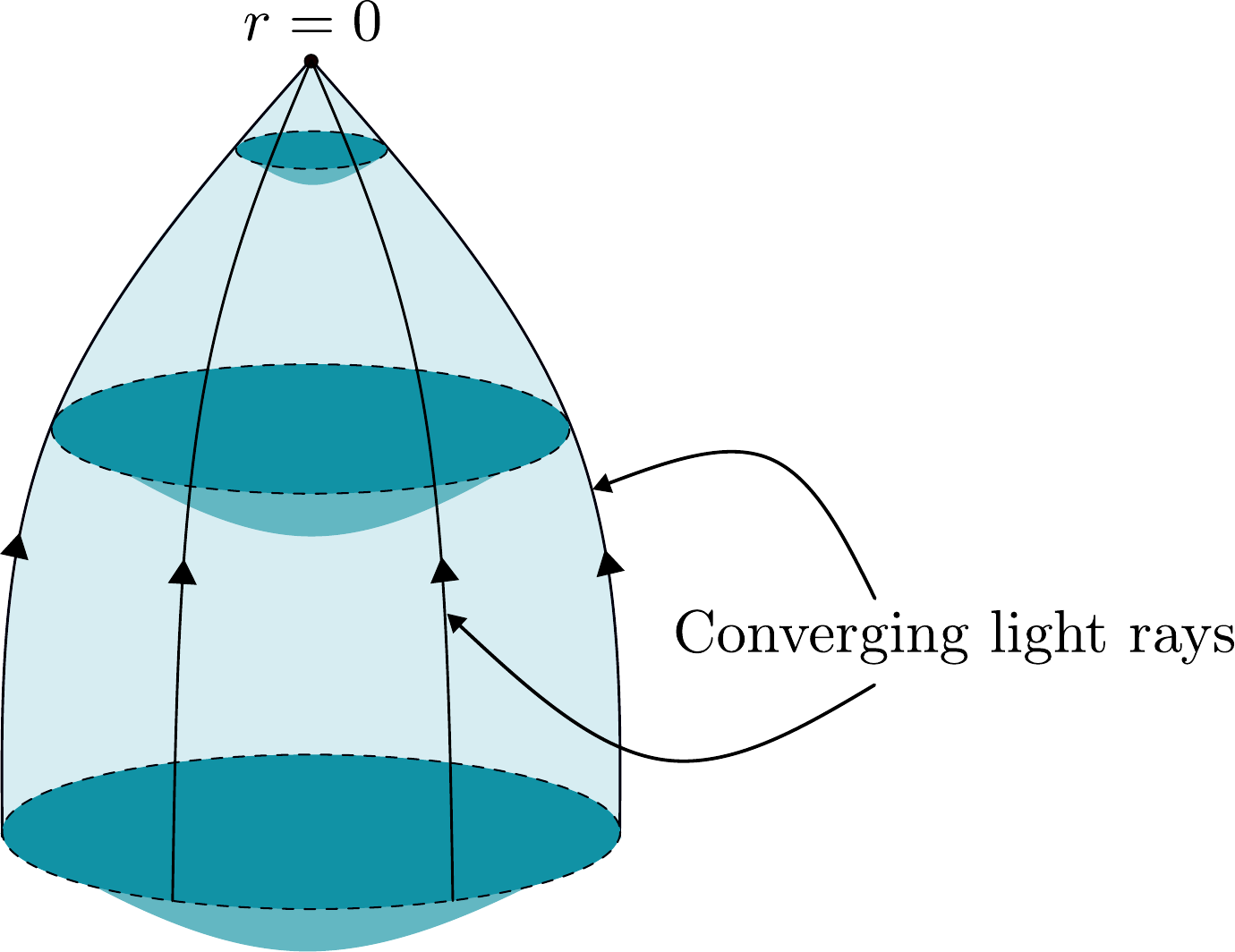}
  \caption{}
  \label{nullfocusfig}
  \end{figure}

We refer to these extremal surfaces as trapped extremal surfaces. If the null rays were infinitely extendable, they would form a causal horizon, and the SLC would then force the volume of extremal surfaces to increase. However, focusing the light rays prevents this, leading to geodesic incompleteness in the black hole interior.

An interesting consequence of this picture is the observation that there is no `space' for the extremal surfaces to grow at the singularity. As a result, the vanishing growth rate of the volume serves as a robust probe for diagnosing the existence of the singularity. In the case of a static uncharged spherically symmetric black hole, this translates to the statement that the volume of spacelike extremal surfaces vanishes at the singularity. In Section \ref{Cauchysec}, we show that the growth rate of timelike extremal surfaces, not their spacelike counterparts, goes to zero at the singularity in Reissner-Nordström and Kerr black holes.

The remainder of the paper is organized as follows. In Section \ref{staticbhsec}, we study the volume growth of extremal surfaces anchored to null surfaces. We introduce the notion of trapped extremal surfaces and prove a singularity theorem in Section \ref{theoremsec}. In Section \ref{Cauchysec}, we examine the growth of extremal surfaces in Reissner-Nordström and Kerr black holes, identifying novel timelike extremal surfaces whose growth rate vanishes at the singularities. In Section \ref{rindlersec}, we make the important observation that SLC implies null geodesic incompleteness, which does not always correspond to the presence of a singularity. We explore this distinction by studying complexity growth in AdS-Rindler, where incompleteness arises from the coordinate chart's failure to cover the entire Poincaré patch.
\section{Growth of Extremal Surfaces}
\label{staticbhsec}
\begin{figure}
  \centering
    \includegraphics[width=0.75\linewidth]{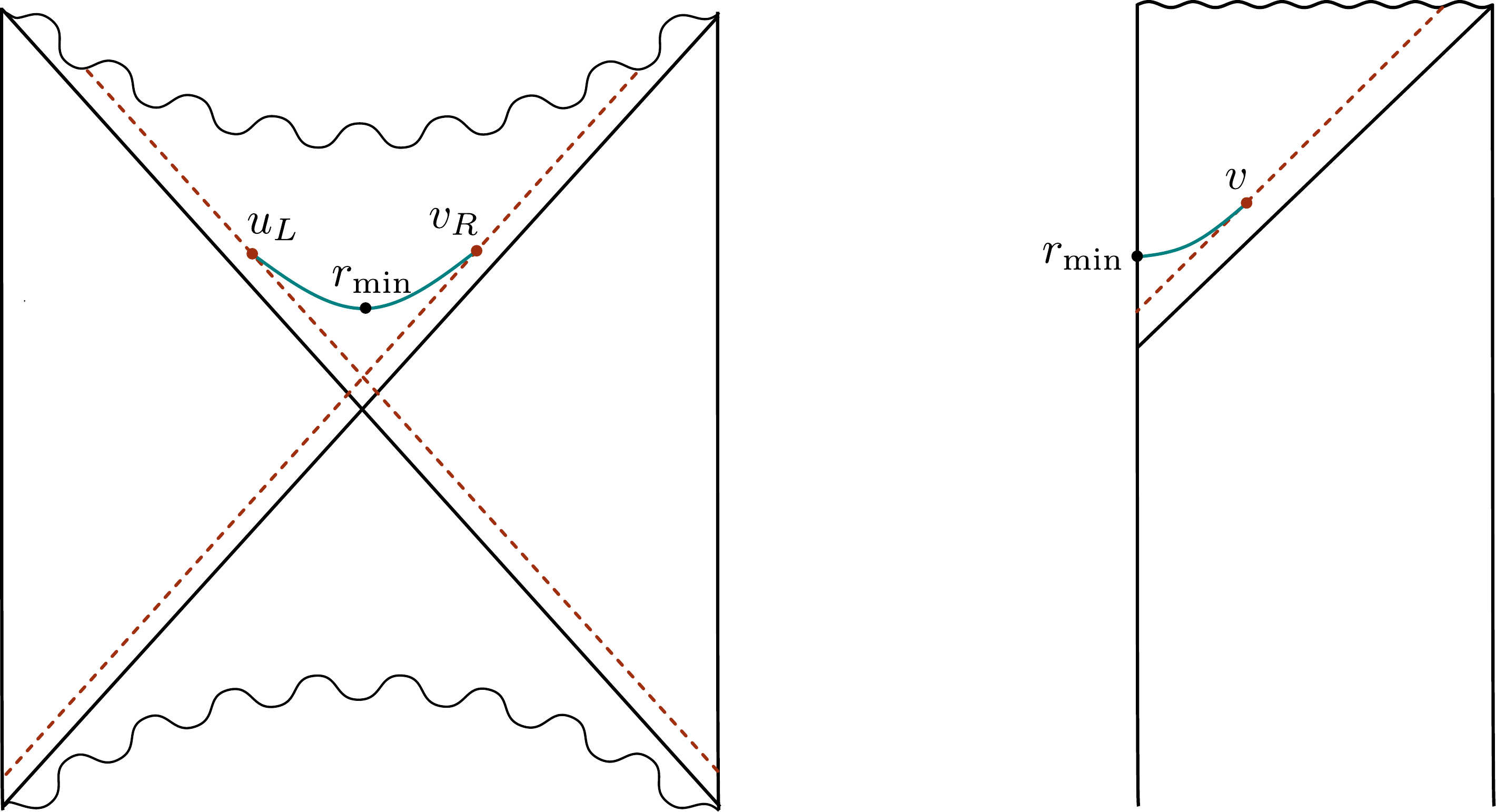}
  \caption{\small Extremal volume surfaces anchored to null surfaces are shown, with anchor surfaces indicated by red dashed lines. We have displayed the extremal curves in the Penrose diagrams of an eternal (left) and a single-sided black hole (right). In both cases, the location of the null surfaces and the anchoring point can be conveniently specified by using Eddington-Finkelstein coordinates $u,v$. The turning point of the surface is located at $r=0$.}
  \label{gencompexamplefig}
  \end{figure}
In this section, we calculate the growth rate of extremal surfaces anchored to null surfaces inside a black hole (see Figure \ref{gencompexamplefig}). In the standard treatment, the extremal surfaces are anchored to a constant radial surface in the asymptotic region (see \cite{Carmi:2017jqz} for a review). Our discussion here differs from the conventional approach in our choice of boundary conditions. In an eternal black hole, we anchor the extremal surface onto two symmetric null surfaces in the interior of the black hole. In the case of a single-sided black hole, we instead use the bridge-to-nowhere prescription \cite{Susskind:2014jwa} where the extremal surface ends at $r=0$. Since the anchor surface falls into the black hole, the growth rate of the extremal surfaces differs from the standard behavior, as discussed below.

Consider an uncharged $(d+1)$-dimensional static eternal black hole with the metric
\bea
\dd{s}^2=-f(r) \dd{t}^2+\frac{\dd r^2}{f(r)}+r^2 \dd{}\Omega_{d-1}^2\,,\label{staticbhmetric}
\eea 
where $f(r)$ is the blackening factor. The black hole spacetime is allowed to asymptote to AdS, dS, or flat space. The event horizon is located at $r=r_h$. The Eddington-Finkelstein coordinates $u$ and $v$ are defined as follows
\bea
v = t+r^{*},\quad u = t-r^{*}, \qquad \text{where} \ \frac{\dd r^{*}}{\dd r} = \frac{1}{f(r)}\,.
\eea 
We then specify the location of the null surfaces by choosing their corresponding Eddington-Finkelstein coordinates to be a constant, which we will denote by $u_0$.

The volume of a spherically symmetric codimension-one surface anchored onto these null surfaces is given by
\bea
\mathcal{V} = \Omega_{d-1}\int \dd\lambda \ r^{d-1}\sqrt{-f(r) \dot{v}^2+2\dot{v}\dot{r}}\,.\label{ogvolintegral}
\eea  
Here, we have used $\lambda$ to parametrizes the surface as $(v(\lambda), r(\lambda))$. We have also denoted the volume of a $(d-1)$-dimensional unit sphere by $\Omega_{d-1}$. We anchor the surface at symmetric points by choosing the coordinates of their endpoints to be $u_L=v_R=v$ (see Figure \ref{gencompexamplefig}). Furthermore, we will use $r_v$ to denote the value of the $r$ coordinate at the anchor points.

The extremal volume surfaces are then obtained by extremizing the volume functional. Since the volume integral is reparametrization invariant, we choose $\lambda$ by imposing the following condition,
\bea
r^{d-1} \sqrt{-f \dot{v}^2+2 \dot{v} \dot{r}}=1\,,\label{reparametrizationeq}
\eea
and this gives us the equations of motion
\bea
\begin{aligned}
p &=r^{2(d-1)}(f(r) \dot{v}-\dot{r})\,, \label{eometernal}\\
r^{2(d-1)} \dot{r}^2 &=f(r)+r^{-2(d-1)} p^2\,.\label{eometernal1}
\end{aligned}
\eea
Here, $p$ is a conserved quantity along the extremal surface. The location of the turning point $r_{\min}$ can be obtained by setting $\dot{r}$ to zero in the above expression
\bea
r^{2d-2}_{\min}f(r_{\min}) +p^2=0\,. \label{turningpointeq}
\eea 
We choose $\lambda$ to increase from left to right in the Penrose diagram. Evaluating the equations of motion at the turning point, we find that $p$ is negative.

Performing a change of coordinates, we can recast the second equation in \eqref{eometernal1} as a particle scattering off a Newtonian potential $V_{\text{eff}} = - r^{2d-2}f(r)$ \cite{Carmi:2017jqz,Gautason:2025ryg}:
\bea
\dot{\tilde{r}}^2 + V_{\text{eff}} = p^2, \quad \ \text{where} \quad \ \frac{\dd \tilde{r}}{\dd r} = r^{2d-2}\,.
\eea 
Here, $p^2$ plays the role of the total energy of the particle. The scattering picture is handy in characterizing the extremal surfaces since we do not have explicit closed-form expressions. The effective potential has a maximum at some $r=r_a$, which we refer to as the \emph{accumulation surface}. The region around this surface plays a crucial role in the usual holographic volume complexity computations \cite{Stanford:2014jda,Susskind:2014rva,Carmi:2017jqz,Gautason:2025ryg}. 
\begin{figure}
  \centering
  \hspace*{0.2cm} \includegraphics[width=1\linewidth]{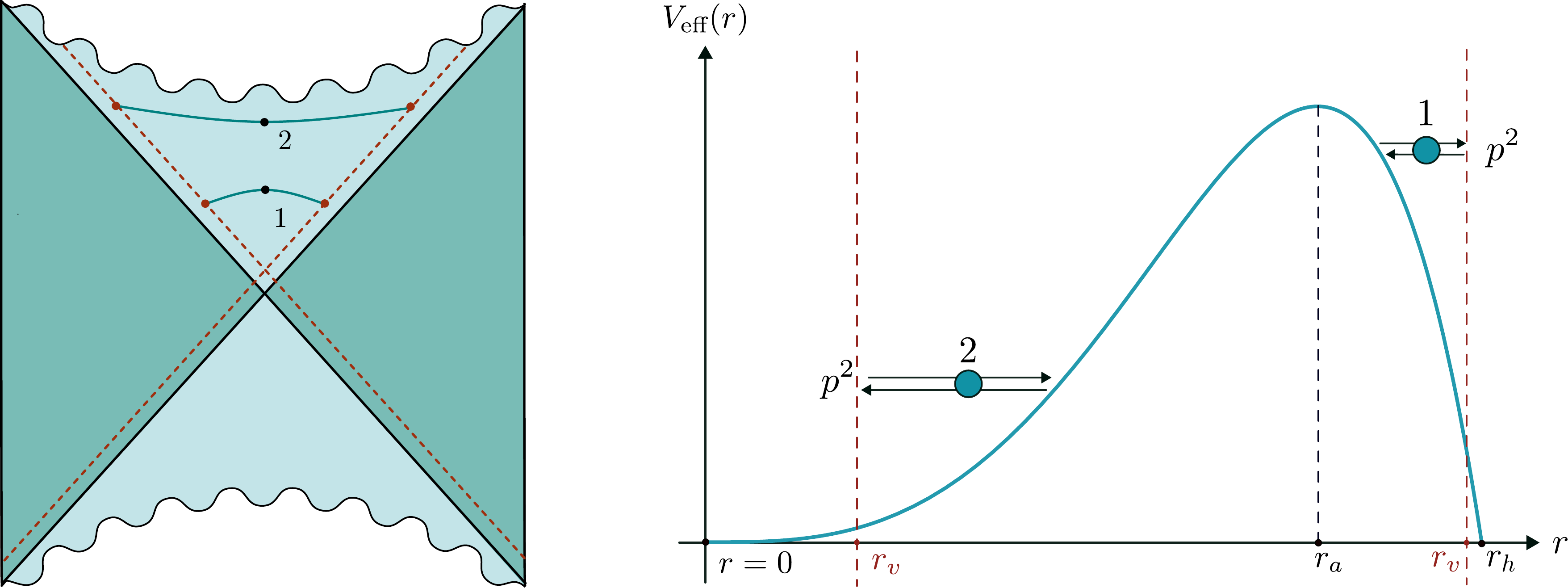}
  \caption{\small \emph{(Left)} Two extremal surfaces in the interior of a static spherically symmetric uncharged black hole. \emph{(Right)} The corresponding trajectories in the scattering picture. The red dashed lines are the locations of the anchor surface.}
  \label{EternalBHfig}
  \end{figure}

Now, let us read off the properties of the extremal curves using the effective potential. Consider the case in which the anchor point is near the bifurcation surface of the black hole, i.e., $r_v$ is close to $r_h$. The associated extremal surface is represented by curve $1$ in the left panel of Figure \ref{EternalBHfig}. In the scattering picture, the surface corresponds to a particle scattering off the right side of the potential (see curve 1 in the right panel of Figure \ref{EternalBHfig}). The particle starts at $r=r_v$, reaches a turning point located at $r_a<r_{\min}<r_v$, and then returns to the anchor surface.

As we move the anchor point along the null surface, $v$ increases, and $r_v$ decreases. The energy of the particle $p^2$ increases and the turning point approaches the top of the potential. When $r_v=r_a$, the extremal surface becomes a constant-$r$ surface. In the scattering picture, this corresponds to a particle sitting at the top of the effective potential.

For any larger value of $v$, $p^2$ decreases. The extremal surfaces beyond this point correspond to a particle scattering off the left side of the effective potential (see curve $2$ in Figure \ref{EternalBHfig}). This ensures that $\dot{r}$ in equation \eqref{eometernal1} takes only real values.

Now, let us compute the volume of the corresponding surfaces. Since $\lambda$ increases from left to right in the Penrose diagram, we can use \eqref{reparametrizationeq} to rewrite the volume integral \eqref{ogvolintegral} as follows:
\bea
\mathcal{V}=2\Omega_{d-1}\int_{\lambda(r=r_{\min})}^{\lambda(r=r_v)} \dd \lambda = 2 \Omega_{d-1} \int_{r_{\min}}^{r_v} \frac{\dd r}{\dot{r}}\,.
\eea 
We have used the symmetry about the $t=0$ surface to simplify the expression. Using the equations of motion, we get
\bea
\mathcal{V}=\pm 2 \Omega_{d-1} \int_{r_{\min}}^{r_v} \dd r \frac{r^{2(d-1)}}{\sqrt{f(r) r^{2(d-1)}+p^2}}\,. \label{volumeexpressioneternalfinal}
\eea 
The choice of sign is determined by the sign of $\dot{r}$. When $r_{\min} \geq r_a$, $\dot{r}$ is positive while it is negative for the extremal surfaces with $r_{\min} <r_a$. This reverses the limits of integration. In the scattering picture, this corresponds to choosing the opposite orientation of the particle.

Using \eqref{eometernal}, we can express $v$ as another $r$ integral: 
\bea
v-r^{*}_{\min} = \int_{r_{\min}}^{r_v} \dd r\left[\frac{p}{f(r) \sqrt{f(r) r^{2(d-1)}+p^2}}+\frac{1}{f(r)}\right]\,. \label{vofp}
\eea 
Here $r^{*}_{\min}$ is the location of the turning point in the tortoise coordinate. Since $r_{\min}$ is related to $p$ through \eqref{turningpointeq}, $v$ depends only on $p$. Therefore, the anchoring point $v$ fixes a value of $p$, which in turn fixes the location of the turning point.

Now, let us calculate the growth rate of these surfaces. When $r_{\min} \geq r_a$, we can use \eqref{volumeexpressioneternalfinal} and \eqref{vofp} to arrive at the following expression \cite{Carmi:2017jqz}
\bea
\frac{\mathcal{V}}{2 \Omega_{d-1}}=\int_{r_{\min }}^{r_{v}} \dd r\left[\frac{\sqrt{f(r) r^{2(d-1)}+p^2}}{f(r)}+\frac{p}{f(r)}\right]-p\left(v-r^*_{\min} \right)\,.
\eea 
Taking a derivative of the above expression w.r.t $v$, we find that
\bea 
\frac{1}{2\Omega_{d-1}} \frac{\dd\mathcal{V}}{\dd v} = \frac{\dd r_v}{\dd v}\left[\frac{\sqrt{f(r_v) r_v^{2(d-1)}+p^2}}{f(r_v)}+\frac{p}{f(r_v)}\right]-p\,,
\eea 
We have used \eqref{vofp} to simplify the expression. Using $r^{*}_v=(v-u_0)/2$, we get
\bea 
\frac{\dd\mathcal{V}}{\dd v} = \Omega_{d-1} \left(-p+\sqrt{f(r_v) r_v^{2(d-1)}+p^2}\right)\,.\label{volumegrowtheternal}
\eea 
Since $p$ is negative, the extremal surfaces grow with $v$. This growth, however, ends when the turning point reaches the accumulation surface. When $r_{\min}<r_a$, we can follow the same steps, and find that
\bea 
\frac{\dd \mathcal{V}}{\dd v} = \Omega_{d-1} \left(p-\sqrt{f(r_v) r_v^{2(d-1)}+p^2}\right)\,.\label{volumegrowtheternal2}
\eea
The growth rate is now negative, and the volume decreases when $v$ increases.

When the anchor point reaches the singularity, the growth rate becomes zero. This is because the corresponding extremal surface is just the $r=0$ surface and $p$ vanishes on it. Since the growth rate vanishes only at the $r=0$ surface, we can use it as an effective diagnostic of the black hole singularity. An important consequence of this observation is that once the anchor point passes the accumulation surface, extremal surfaces are inevitably `doomed' to fall into the singularity where their volume growth goes to zero. The accumulation surface, therefore, acts as a \emph{trapping} surface. We will use this crucial fact to prove a singularity theorem in the next section.
\section{Geodesic Incompleteness from Second Law}
\label{theoremsec}
The location of the anchor surfaces in the previous section can be generalized using the causal structure of spacetime. Consider some point $p$ on the spacetime manifold $M$. We define its chronological future $I^{+}(p)$ as the set of all points in $M$ that can be reached by future-directed timelike curves passing through $p$. Then, by definition, the boundary of $I^{+}(p)$, which we will denote by $\partial I^{+}(p)$, is an achronal boundary \cite{Hawking:1973uf}. Moreover, $N \equiv \partial I^{+}(p)$ is generated by null geodesics emanating from $p$ (see Theorem 8.1.2 in \cite{Wald:1984rg}). The extremal surfaces we studied in the previous section can be understood as being anchored to $N$ for some choice of $p$. In the case of a static black hole, we chose $p$ to be a point lying on the $t=0$ surface.

Now let us define a notion of a \emph{causal horizon}. Consider a future-infinite timelike worldline $W$. We define a future-causal horizon $H_\text{fut}$ as the boundary of the past of the worldline, that is, $H_{\text{fut}} = \partial I^{-}\left(W\right)$ \cite{Jacobson:2003wv}. It is important to emphasize that $W$ is future-infinite; that is, the proper time along the worldline runs all the way to infinity. This distinguishes $H_\text{fut}$ from the future boundary of the domain of dependence of arbitrary subregions. We can then quickly verify that this definition of causal horizons includes black hole event horizons as well as observer-dependent horizons, such as Rindler and cosmological horizons.

Using the definition of a causal horizon, we rephrase the second law in \cite{Brown:2016wib,Brown:2017jil} as follows:

\textbf{Second Law of Complexity} (SLC): The volume of extremal surfaces anchored to a causal horizon $H_{\text{fut}}$ always increases as we move the anchor point along the surface into the future.

It is important to note that we do not have a proof of SLC. Explicit computations of holographic complexity in various cases, for example, in static black holes \cite{Carmi:2017jqz}, Rindler wedges, rotating black holes \cite{Couch:2018phr}, even non-static black holes \cite{Chapman:2018dem,Chapman:2018lsv} all point towards the validity of SLC\footnote{Readers familiar with holographic complexity computations might notice that the SLC used here differs from standard definition, as the extremal surface is anchored to the horizon rather than to a cutoff surface in the asymptotic region. However, we can quickly convince ourselves that the growth of these surfaces comes from the turning point, which lies well inside the horizon, and that both formulations are classically equivalent \cite{Carmi:2017jqz} (see also \cite{Stanford:2014jda,Schneiderbauer:2019anh,Gautason:2025ryg} for related discussions).}. We will not attempt to prove SLC here. Instead, we assume its validity.

The second law, as stated here, is coarse-grained in that we have ignored quantum corrections to it. It is known that there are corrections non-perturbative in the exponential of the Bekenstein-Hawking entropy, which will cause the saturation of the volume \citeleft\citen{Iliesiu:2021ari}\citepunct\citen{Balasubramanian:2022gmo}\citepunct\citen{Gautason:2025ryg}\citeright. We ignore these corrections for now and revisit them in Section \ref{Discussionsection}.

Now, let us define a future \emph{trapped extremal surface} as an extremal surface anchored to $N$ with the property that its volume decreases as the anchor point is moved into the future. We see that the extremal surfaces in the previous section with $r_{min}<r_a$ are trapped extremal surfaces. An important point to note here is that the existence of these surfaces does not violate SLC, as the anchoring surfaces are not causal horizons. Also, by definition, trapped extremal surfaces are compact.

In this section, we restrict ourselves to globally hyperbolic spacetimes, where (a) there are no closed timelike curves, and (b) for any two points $p$ and $q$, $J^{+}(p) \cap J^{-}(q)$ is compact, where $J^{\pm}(p)$ is the causal future/past of $p$.

The statement of the SLC turns out to be extremely powerful because it provides a \textit{bulk} condition on the growth rate of volumes anchored to causal horizons. An immediate implication of the SLC is that if there are trapped extremal surfaces somewhere in the spacetime, then they \emph{cannot} be anchored to a causal horizon. Now, let us use this fact to arrive at a singularity theorem.
\begin{theorem}
Consider a globally hyperbolic spacetime $M$ with a trapped extremal surface $T$. Then, the spacetime is null geodesically incomplete.
    \end{theorem}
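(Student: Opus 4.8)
The plan is a proof by contradiction modelled on Penrose's argument, but with the Second Law of Complexity doing the work that the null energy condition does there. Suppose, contrary to the claim, that $M$ is null geodesically complete. By hypothesis $T$ is a trapped extremal surface, so it is anchored to $N=\partial I^{+}(p)$ for some $p\in M$, and by Theorem~8.1.2 of \cite{Wald:1984rg} the hypersurface $N$ is ruled by the future-directed null geodesics emanating from $p$. The paper has already noted the key corollary of the SLC, that a trapped extremal surface cannot be anchored to a future-causal horizon. Hence the theorem reduces to a single implication: if $M$ is null geodesically complete, then the part of $N$ to which $T$ is anchored is a future-causal horizon --- which immediately contradicts the corollary and forces $M$ to be null geodesically incomplete.

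To establish that implication I would first invoke the structure of achronal boundaries, exactly as in the Penrose argument: the generators of $N$ through the compact anchor locus of $T$ either leave $N$ at a future endpoint --- a caustic or crossover point, past which they enter the interior of $I^{+}(p)$ --- or they remain on $N$ forever, in which case null completeness makes each such generator $\gamma$ future-complete with unbounded affine parameter. One then slides the anchor locus of $T$ toward the future along the generators of $N$, producing the one-parameter family $\{T_v\}$; by the trapped property $\mathcal{V}(T_v)$ is strictly decreasing in $v$. The crucial step is to show that, far enough out along a complete generator $\gamma$, the members of this family are, classically, extremal surfaces anchored to a future-causal horizon. Concretely I would exhibit a future-inextendible timelike worldline $W$ of unbounded proper time that ``rides just inside'' $\gamma$, arranging that $\partial I^{-}(W)$ agrees with $N$ on a neighbourhood of $\gamma$ containing the turning points and the bulk of the volume of $T_v$ for all sufficiently large $v$; this is the precise content of the heuristic that ``infinitely extendable null rays form a causal horizon,'' and global hyperbolicity enters to keep the relevant causal sets well behaved. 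Since the volume growth of an extremal surface is controlled by its turning point, which sits well inside the anchoring null surface --- the fact underlying the equivalence of anchoring prescriptions discussed earlier in the paper --- the family $\{T_v\}$ is then effectively anchored to $H_{\text{fut}}=\partial I^{-}(W)$, and its strictly decreasing volume violates the SLC.

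I expect this ``promotion'' step to be the main obstacle, and it carries two genuine difficulties. First, one must handle generators of $N$ that acquire a future endpoint before running off to infinity: the argument really rests on the sub-bundle of future-complete, endpoint-free generators, and one needs this sub-bundle to be nonempty and to keep carrying the anchor loci of $T_v$ for arbitrarily large $v$. Here the absence of the null energy condition is a feature rather than a bug --- nothing forces the good generators to refocus --- but a configuration in which \emph{every} generator focuses would have to be analysed on its own. Second, $\partial I^{+}(p)$ is not literally the past boundary of any worldline, so the identification with a causal horizon is asymptotic in nature; making the ``hugging'' worldline $W$ precise, and verifying that the discrepancy between $N$ and $\partial I^{-}(W)$ away from $\gamma$ does not affect the volumes in the trapped family, is where the real work lies. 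Once this is in place, the remaining steps --- sliding the anchor, invoking the SLC, and quoting the corollary --- are purely formal.
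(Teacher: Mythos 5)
Your opening move coincides with the paper's (the SLC corollary that a trapped extremal surface cannot be anchored to a causal horizon), but after that your route diverges from the paper's proof and, as you yourself flag, the load-bearing steps are missing. The paper does not try to promote $N=\partial I^{+}(p)$ to a causal horizon along a single surviving complete generator. It uses the SLC exactly once, in contrapositive form: if \emph{every} generator of $N$ had infinite affine parameter on $N$, then $N$ would be a causal horizon and the SLC would force the anchored volumes to grow, contradicting the trapped property; hence the affine parameters of the generators on $N$ are bounded. The rest is the Penrose-style topological argument: assuming null geodesic completeness, the endpoints of the generators lie in $M$ and remain on $N$, so $N$ is closed; rescaling the bounded affine parameters to $[0,1]$ gives $N \cong T\times[0,1]$, hence compact; and a compact achronal boundary cannot be mapped injectively, along the integral curves of a timelike vector field, onto a connected non-compact Cauchy slice --- contradicting global hyperbolicity. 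Your sketch never uses global hyperbolicity in this essential way; you invoke it only vaguely ``to keep the relevant causal sets well behaved.''

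This matters because the configuration you explicitly set aside (``a configuration in which \emph{every} generator focuses would have to be analysed on its own'') is precisely the case the theorem must cover, and the generic one inside a black hole: null completeness does not prevent generators from leaving $\partial I^{+}(p)$ at caustics or crossover points, and the presence of a trapped extremal surface is exactly the situation in which one expects all of them to do so (this is the heuristic of Figure 1). In that case there is no complete, endpoint-free generator on which to hang your hugging worldline $W$, so your promotion step has nothing to act on, yet incompleteness still has to be derived; the paper's compactness-versus-global-hyperbolicity argument is what closes this case, and your proposal offers no substitute. Separately, even when complete generators exist, your identification of $N$ with $\partial I^{-}(W)$ near $\gamma$ is left as a conjecture (``where the real work lies''): the SLC as stated governs extremal surfaces anchored to the causal horizon itself, and trading surfaces anchored to $N$ for surfaces anchored to $\partial I^{-}(W)$ without spoiling the monotonicity statement is a nontrivial claim you do not establish. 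As it stands, the proposal therefore has a genuine gap in both prongs of its case analysis.
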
 
\begin{proof} 
Let us assume the trapped extremal surface $T$ is anchored to a null surface $N=\partial I^{+}(p)$, for some point $p \in M$. We will suppose for a contradiction that the affine parameter of every null geodesic generating $N$ can be extended infinitely into the future. This would then imply that $N$ is a causal horizon. From SLC, the volume of the trapped surface would always increase, contradicting the defining property of a trapped extremal surface. Therefore, the affine parameter of the null geodesics on $N$ must terminate at a finite value.

Now, we can complete the proof using the original Penrose theorem \cite{Penrose:1964wq} (see \cite{Hawking:1973uf,Wall:2010jtc} for more details). Suppose $M$ were null geodesically complete. Then the null geodesic generators of $N$ could be extended to the future beyond $N$. An important point to note is that the affine parameter \emph{on} $N=\partial I^{+}(p)$ is bounded, and any extension of the null generators to the future would lie inside $I^{+}(p)$, not on its boundary (see Proposition 4.5.14 and Section 8.2 of \cite{Hawking:1973uf} for a related discussion). Null geodesic completeness of $M$ implies that the endpoints of the null geodesic generators of $N$ lie within the spacetime and therefore remain part of $N$, ensuring that $N$ is closed.

We now argue that $N$ is compact. Since the affine parameters are bounded, we can rescale them to lie within the range $[0,1]$. Then $N$ can be topologically written as $T \times [0,1]$, which is a product of a compact space and a closed interval, making $N$ compact.

This conclusion contradicts the assumption that the spacetime is globally hyperbolic. To see this, consider a connected, non-compact Cauchy slice $\Sigma$. We can always find a smooth timelike vector field $t^{\mu}$ whose integral curves intersect $\Sigma$ exactly once. Because $N \equiv \partial I^{+}(p)$ is achronal, these integral curves intersect $N$ at most once as well. This allows us to define a one-to-one map between $\Sigma$ and $N$ using the integral curves of $t^{\mu}$. However, this leads to a contradiction since the global hyperbolicity of the spacetime prohibits the evolution of a non-compact slice $\Sigma$ into a compact surface $N$. Therefore, the assumption that $M$ is null geodesically complete, which was used to argue that $N$ is compact, is incompatible with the other assumptions.
\end{proof}
Before we conclude the discussion, a few comments are in order:
\begin{itemize}
  \item Our theorem follows the same structure as other singularity theorems \cite{Senovilla:1998oua}. Specifically, we impose a `causality condition' by assuming global hyperbolicity of the spacetime, which helps rule out important edge cases (see, e.g., \cite{Borde:1994ai}). The second law of complexity replaces the usual `energy condition,' while trapped extremal surfaces provide the `initial condition.'
  \item In the standard CV prescription, complexity is assigned to the volume of a surface that lies outside the accumulation surface. No such prescription exists for trapped extremal surfaces located inside the accumulation surface. While it may be possible to define a notion of complexity for these trapped surfaces, the proof does not rely on such a definition. The SLC is carefully phrased to depend only on the volume of extremal surfaces anchored to causal horizons. As a result, we avoid the need to interpret trapped extremal surfaces in terms of complexity and do not rely on any non-standard notions of holographic complexity.
\end{itemize}

\section{To Cauchy Horizons and Beyond}
\label{Cauchysec}
In Section \ref{staticbhsec}, we studied the growth of extremal surfaces in static, uncharged black holes. In this section, we will extend this discussion to the case of a charged black hole. Since the metric of these black holes can also be brought to the static form in \eqref{staticbhmetric}, similar computations are expected to hold. However, the presence of a Cauchy horizon in the interior brings in new surprises.

\subsection{Reissner-Nordstr\"{o}m Black Hole}
\label{RNsec}
Consider a Reissner-Nordstr\"{o}m (RN) black hole in asymptotically flat spacetime. We will work in $3+1$-dimensions. The black hole has the metric \eqref{staticbhmetric}, with the following blackening factor
\bea
f(r) = \left(1-\frac{r_+}{r}\right)\left(1-\frac{r_-}{r}\right)\,,
\eea
where $r_{\pm}$ are the location of the horizons with $r_+>r_-$. The inner horizon is a Cauchy horizon and signals the breakdown of global hyperbolicity \cite{Hawking:1973uf}. The original Penrose theorem and its entropic reformulations do not apply to such black holes because of this breakdown of global hyperbolicity.
\begin{figure}
  \centering
  \hspace*{-0.9cm} \includegraphics[width=1.1\linewidth]{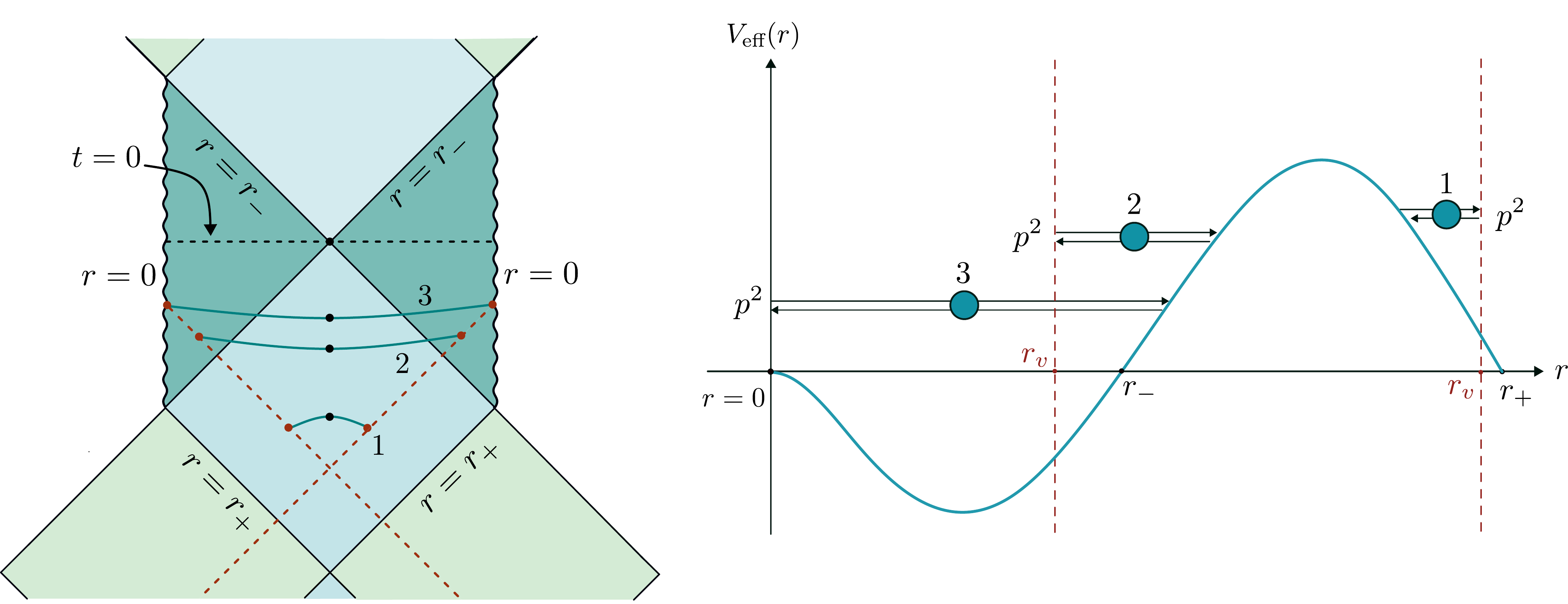}
  \caption{\small \emph{(Left)} Spacelike extremal surfaces in a Reissner-Nordström black hole. \emph{(Right)} The corresponding trajectories in the scattering picture. The anchor surface of curve 3 is located at the singularity.}
  \label{RNspacelikefig}
  \end{figure}

Now let us study the growth rate of \emph{spacelike} extremal volume surfaces. As in the previous section, we will look at spherically symmetric extremal volume surfaces anchored onto two symmetric null surfaces in the interior of the black hole. The equations of motion for these surfaces are once again given by \eqref{eometernal}. The extremal volumes in the interior of the Reissner-Nordström (RN) black hole were studied in \cite{PhysRevD.31.1267}, and we have illustrated a few of these surfaces in Figure \ref{RNspacelikefig}.

In the scattering picture, these surfaces can be thought of as particles scattering off the effective potential $V_{\text{eff}}$ (see right panel of Figure \ref{RNspacelikefig}). As we move the anchor point towards the singularity,  the volume of extremal surfaces initially increases but begins to decrease once the anchor point crosses the accumulation surface. The volume growth rates are given by \eqref{volumegrowtheternal} and \eqref{volumegrowtheternal2}. The crucial difference from the uncharged case is that $p$ never goes to zero, even when the anchoring point has reached the singularity.

To see this, consider \eqref{volumegrowtheternal} where the growth rate can be seen to be proportional to $p$. From the first equation in \eqref{eometernal}, we observe that $p$ vanishes if and only if $f(r)=0$ or $t=$ constant along the extremal surface. Since the condition $f(r)=0$ is never satisfied, the only possibility is to consider null anchoring surfaces that reach the singularity at the $t=0$ slice (depicted by the horizontal dotted line in the Penrose diagram in Figure~\ref{RNspacelikefig}). For any other general anchoring surface, the growth rate remains nonzero. Therefore, the fate of the trapped extremal surfaces is not as doomed as in the previous case, as their volumes never shrink to zero when the anchor point reaches the singularity. This is because the null anchoring surfaces go off to timelike $r=0$ slices in two \emph{different} coordinate patches. This suggests that the vanishing of the volume of extremal surfaces ceases to be a suitable probe of the singularity.

However, we can consider an alternate computation that fixes this problem. We do not anchor both the endpoints of the extremal surface to the null surfaces. Instead, we will anchor only the `right' endpoint and let the `left' one terminate at its turning point on the $t=0$ line. Now, let us repeat the extremal surface computation. When $r_-<r_v<r_a$, we obtain precisely half of the spacelike extremal surfaces (see curve 1 in Figure \ref{RNBHfig}). As discussed earlier, the growth rate of the extremal surfaces is always positive.

When the anchor point crosses the inner horizon, we find new behaviour. There are now two possible choices for the extremal surfaces. We have the usual spacelike surface (curve 2 in Figure \ref{RNBHfig}). Additionally, we get a timelike surface that satisfies the same boundary conditions (curve 3 in Figure \ref{RNBHfig}). The orientation of the scattering trajectory in Figure \ref{RNBHfig} is chosen to ensure that $\dot{r}$ remains real. The lack of a unique extremal surface is a consequence of the breakdown of global hyperbolicity of the spacetime. The growth rate of these timelike surfaces can be computed using the techniques in Section \ref{staticbhsec}. This time, we choose the parameter along the surface as follows:
\bea
r^{d-1} \sqrt{-f \dot{v}^2+2 \dot{v} \dot{r}}=-1\,,\label{reparametrizationtimeeq}
\eea
where the negative sign on the right-hand side comes from the fact that we are working with timelike surfaces. Denoting the conserved quantity along the surface by $p$, we arrive at the following equations of motion
\bea
\begin{aligned}
p &=r^{2(d-1)}(f(r) \dot{v}-\dot{r})\,, \\
r^{2(d-1)} \dot{r}^2 &=-f(r)+r^{-2(d-1)} p^2\,.\label{eometernaltime}
\end{aligned}
\eea
Note that the equations of motion can also be obtained from \eqref{eometernal} by the simple substitution $p \to i p$ and $\lambda \to i \lambda$. Choosing $\lambda$ to increase in the downward direction in the Penrose diagram, we find that $p\geq 0$.
\begin{figure}
  \centering
  \hspace*{-0.9cm} \includegraphics[width=1.1\linewidth]{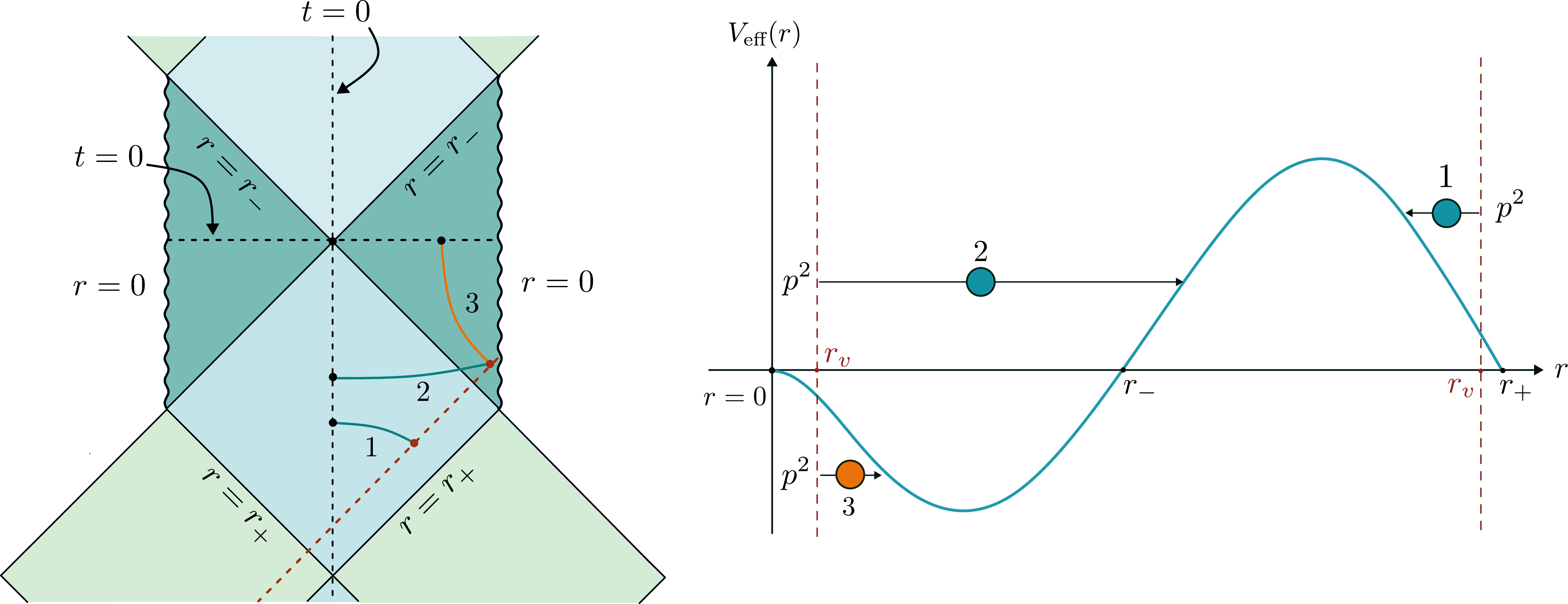}
  \caption{\small The teal curves correspond to spacelike extremal surfaces, while the orange curve is timelike. The red dashed lines once again correspond to the locations of the anchor surface.}
  \label{RNBHfig}
\end{figure}

We can once again rewrite the equation involving the $r$ coordinate as a particle scattering of an effective potential $\tilde{V}_{\text{eff}}$. The new effective potential turns out to be the negative of the spacelike effective potential $V_{\text{eff}}$:
\bea
\tilde{V}_{\text{eff}}(r) = -V_{\text{eff}}(r) = r^{2d-2}f(r)\,.
\eea
As in the case of spacelike surfaces, there is an accumulation surface whose location corresponds to the maximum of $\tilde{V}_{\text{eff}}$, which is the same as the minimum of $V_{\text{eff}}$. We will denote the location of this surface by $r=\hat{r}_a$. The orientation of the particle in the scattering picture depends on its turning point w.r.t to the new accumulation surface. When $r_{\min}\geq\hat{r}_a$, the volume of these timelike extremal surfaces is given by
\bea
\mathcal{V}_{\text{timelike}}= i \Omega_{d-1} \int_{r_{\min}}^{r_v} \dd r \frac{r^{2(d-1)}}{\sqrt{-f(r) r^{2(d-1)}+p^2}}\,. \label{volumeexpressiontimelike}
\eea 
The volume is imaginary, which follows from our choice of metric signature. The limits of integration reverse when $r_{\min}<\hat{r}_a$. The growth rate can then be found to be:
\bea 
\frac{\dd\mathcal{V}_{\text{timelike}}}{\dd v} = \begin{dcases}
 \frac{i\Omega_{d-1}}{2} \left(p-\sqrt{p^2-f(r_v) r_v^{2(d-1)}}\right)\, & r_{\min}\geq\hat{r}_a\\
 \frac{i\Omega_{d-1}}{2} \left(-p+\sqrt{p^2-f(r_v) r_v^{2(d-1)}}\right)\, & r_{\min}<\hat{r}_a\\
\end{dcases}.\label{volumegrowthtimelike}
\eea 
As in the spacelike case, the growth rate changes its sign when the anchor point crosses the accumulation surface. This prompts us to call the surfaces with $r_{\min} <\hat{r}_a$ \emph{timelike trapped extremal surfaces}.

More importantly, we observe that $p$ vanishes when $r_v=0$. In this case, the extremal surface simplifies and becomes the constant $r=0$ slice, and the right-hand side of the first equation of motion \eqref{eometernaltime} vanishes identically. Consequently, the growth rate of the extremal surface drops to zero. Since the singularity is the only region where the growth rate vanishes, this behavior can be used as a robust signature of the black hole singularity.

While it is perhaps not surprising that timelike extremal surfaces are sensitive to timelike singularities, their dual interpretation remains unclear. In particular, the meaning of the volume of a timelike extremal surface in the boundary theory is not immediately evident. Nonetheless, it is tempting to speculate that this may hint at a notion of ``timelike complexity'' in the dual description. Interestingly, timelike entanglement entropy has been proposed as a method for computing what is referred to as pseudo-entropy in the boundary theory \cite{Doi:2022iyj,Narayan:2022afv,Doi:2023zaf,Narayan:2023ebn}. It is natural to expect a corresponding dictionary to exist in the case of complexity. We leave this for future work.
\subsection{Kerr Black Hole}
The vanishing of the growth rate of timelike extremal surfaces at the black hole singularity is not limited to spherically symmetric black holes. We can see this by studying a $3+1$-dimensional Kerr black hole of mass $M$ and angular momentum $a M$. The metric in Boyer--Lindquist coordinates is given by
\bea
\dd s^2=-\frac{\Delta}{\rho^2}\left(\dd t-a \sin ^2 \theta \dd \phi\right)^2+\frac{\sin ^2 \theta}{\rho^2}\left[\left(r^2+a^2\right) \dd \phi-a \dd t\right]^2+\frac{\rho^2}{\Delta} \dd r^2+\rho^2 \dd \theta^2\,, \label{kerrmetriceq}
\eea 
where
\bea
\Delta=r^2-2 G M r+a^2 \quad \text { and } \quad \rho^2=r^2+a^2 \cos ^2 \theta\,.
\eea 
The black hole horizons are located at $r_{\pm}$. The ring singularity sits at $r=0$ and $\theta=\pi/2$. We can also define an Eddington-Finkelstein coordinate $v = t+r^{*}$, where
\bea
\dd r^{*} = \frac{r^2+a^2}{\Delta}\dd r\,.
\eea 
As in the RN case, we consider extremal surfaces anchored at a null surface inside the black hole. However, we now focus on axially symmetric configurations, as the black hole is rotating. We anchor the `right' endpoint of the extremal surfaces at the null surface. The other endpoint is chosen to lie at its turning point. The spacetime is not globally hyperbolic, and as a result, when the anchor point crosses the inner horizon, there are new timelike extremal surfaces. In particular, as the anchor point approaches the singularity, the timelike extremal surfaces approach the singularity. Let us see how this works out.

Consider a general axisymmetric hypersurface in the interior of the black hole. The $r$ coordinate can be parametrized by $r(t,\theta)$ and the line element of this three-manifold can be computed by substituting $\dd r = r_{,t}\dd t+r_{,\theta} \dd \theta$ into to the metric \eqref{kerrmetriceq}. Once we have the metric, we can compute the volume by integrating over the variables $t,\theta$, and $\phi$.  The full expression is given in \cite{PhysRevD.31.1267}, although we will not need it explicitly here since we are only focusing on surfaces near the ring singularity.

We therefore restrict ourselves to the case where $r$ is just a function of $\theta$. We will also assume that the surface is $t$-independent. In this case, the three-metric computation simplifies significantly, and we obtain \cite{Couch:2018phr,Chew:2020twk}
\bea
\dd \mathcal{V} =\sqrt{\rho^2\left(|\Delta|-r_{, \theta}^2\right)} \sin \theta d t \wedge d \wedge d \theta \wedge d \phi\,. \label{volumekerrelement}
\eea
The extremal hypersurfaces are given by extremizing this integral over the volume element. If $r=$ constant, the Euler-Lagrange equation is given by the simple expression
\bea
\frac{\dd}{\dd r} \left(\rho^2\Delta\right)=0\,.
\eea 
We can see that this is trivially satisfied by the ring singularity at $r=0$ and $\theta=\pi/2$. Therefore, the singularity is an extremal volume surface. Moreover, it is a \emph{timelike} extremal surface.

Now, let us compute the growth rate of the timelike surfaces as they approach the singularity.  Since $\dd v= \dd t$ on a constant $r$ slice, we can use the volume element \eqref{volumekerrelement} to obtain
\bea
\frac{\dd \mathcal{V}}{\dd v} = \int \sqrt{\rho^2 \Delta}\sin{\theta}  \ \dd \theta \wedge \dd \phi\,.
\eea 
At the singularity, the integrand vanishes, and the growth of timelike extremal surfaces goes to zero. This suggests that the growth rate of timelike or spacelike extremal surfaces can serve as a useful probe of black hole singularities in various cases.
\section{Incompleteness in Rindler Wedge}
\label{rindlersec}
Incompleteness of null geodesics does not necessarily imply the presence of a singularity where the geodesics terminate \cite{Hawking:1973uf}. In this section, we explore the alternative possibility by studying complexity growth in AdS-Rindler wedges. In global coordinates, AdS$_3$ has the metric
\bea
\dd s^2 =\frac{1}{\cos ^2 \rho}\left(-\dd \tau^2+ \dd \rho^2+\sin ^2 \rho \dd \theta^2\right)\,.
\eea
We have set the AdS length scale to one for convenience. Using the coordinate transformations, 
\bea
\begin{aligned}
 t & =\frac{ \sin \tau}{\cos \tau+\sin \rho \cos \theta}\,, \\
 x & =\frac{ \sin \theta \sin \rho}{\cos \tau+\sin \rho \cos \theta}\,, \\
 z & =\frac{ \cos \rho}{\cos \tau+\sin \rho\cos \theta}\,,
  \end{aligned}
\eea
we can rewrite the metric in the Poincar\'{e} coordinates as follows:
\bea
\dd s^2 = \frac{\dd z^2 -\dd x^{+} \dd x^{-}}{z^2}\,, \qquad \text{where} \ x^{\pm} = t\pm x\,.
\eea

\begin{figure}
  \centering
  \hspace*{-1.5cm} \includegraphics[width=0.7\linewidth]{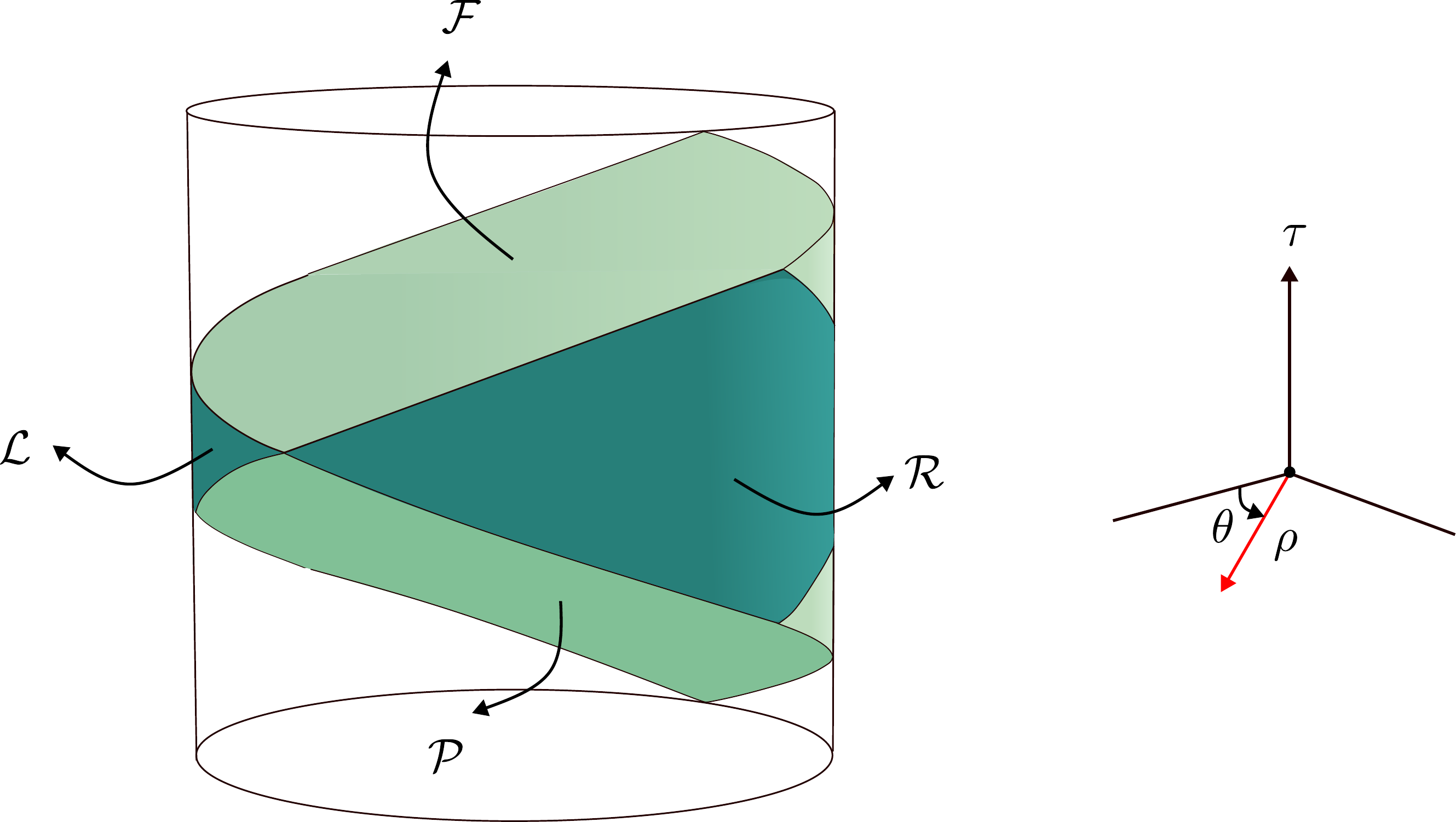}
  \caption{\small The Poincar\'{e} patch of AdS$_3$. The global coordinates cover the entire cylinder. The Poincar\'{e} coordinates cover only the solid region shown in the figure, which can be further divided into four Rindler wedges.}
  \label{rindlerwedgefig0}
  \end{figure}

  \noindent The region $z>0$ and $x^{\pm} \in (-\infty,\infty)$ defines the Poincar\'{e} patch of the spacetime. We can divide the patch into four Rindler regions depending on the signs of $(x^+,x^-)$. We label these regions $\mathcal{R},\mathcal{L},\mathcal{F}$, and $\mathcal{P}$ when the signs are (+, -), (-, +), (+, +), and (-, -) respectively (see Figure \ref{rindlerwedgefig0}). The relevance of the boundary of these wedges in the context of AdS/CFT correspondence can be found in \cite{Leutheusser:2021frk}.

It turns out that we can also cover these Rindler regions using a set of BTZ coordinates $(t,r,\phi)$ given by
\bea
e^{2 t}=\frac{x^{+}}{x^{-}}, \quad e^{2 \phi}=z^2-x^{+} x^{-}, \quad r^2=\frac{z^2-x^{+} x^{-}}{z^2}\,.\label{BTZmetriccoordeq}
\eea 
The metric in these coordinates takes the form
\bea
\dd s^2 = -\left(r^2-1\right)\dd t^2 -\frac{\dd r^2}{\left(r^2-1\right)}+ r^2\dd \phi \,. \label{adsrindlermetric}
\eea
The crucial difference between the BTZ metric and the AdS-Rindler metric in BTZ coordinates \eqref{adsrindlermetric} is that $\phi$ is non-compact in the latter case.

The AdS-Rindler horizon is at $r=1$. The Rindler regions $\mathcal{R},\mathcal{L}$ are mapped to the left and right exterior wedges of the BTZ “black hole.” In contrast, the Rindler regions $\mathcal{F}, \mathcal{P}$ are mapped to the future and past interiors, respectively. Now, let us look at the growth of extremal surfaces in this coordinate system. We will study surfaces that are symmetric in the $\phi$ coordinate. Since this direction is non-compact, we will regularize the volume by choosing appropriate constant-$\phi$ cutoff surfaces. The metric \eqref{adsrindlermetric} has the same form as \eqref{staticbhmetric}. This allows us to port the uncharged static black hole computations. A trapped extremal surface exists in the interior of the ``black hole,'' our singularity theorem tells us that null geodesics are incomplete.

The theorem is signaling the existence of the $r=0$ surface. Clearly, this surface is not a singularity in the full AdS spacetime. Nevertheless, it appears problematic from the perspective of the BTZ coordinate system. To understand the significance of the $r=0$ surface, we must look at the relationship between the Poincar\'{e} coordinates and the BTZ coordinates. 

\begin{figure}
  \centering
  \hspace*{-1.5cm} \includegraphics[width=1\linewidth]{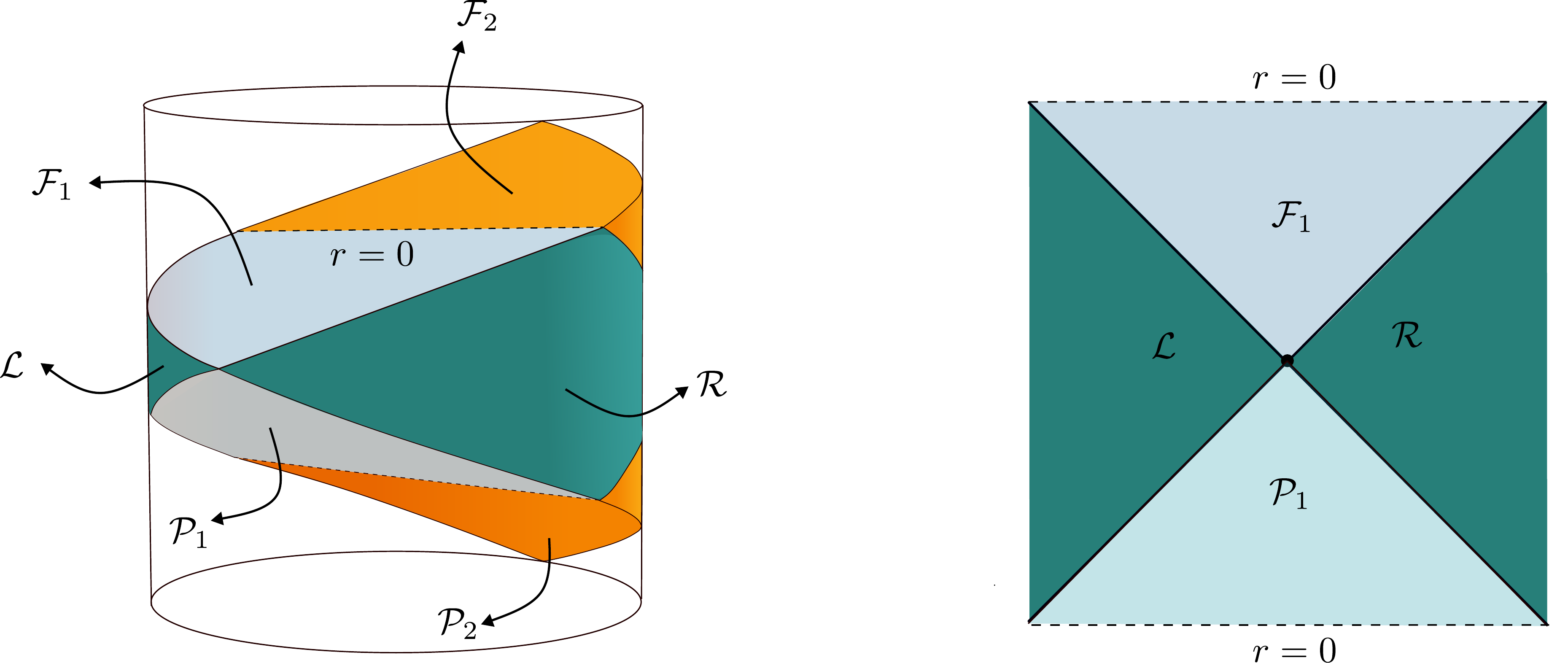}
  \caption{\small \emph{(Left)} The patch $\mathcal{F}$ is divided into two regions, $\mathcal{F}_1$ and $\mathcal{F}_2$. The BTZ coordinates cover only $\mathcal{F}_1$. \emph{(Right)} The mapping of various Rindler regions to the corresponding BTZ ``black hole.''}
  \label{rindlerwedgefig}
  \end{figure}

It turns out that the BTZ coordinates do not cover the entire Rindler region $\mathcal{F}$ \cite{Leutheusser:2021frk}. The BTZ coordinates cover only a part of $\mathcal{F}$, which we will denote by $\mathcal{F}_1$ (see Figure \ref{rindlerwedgefig}). 
 
In the region $\mathcal{F}_2$, we must use a different set of coordinates (the coordinate transformations can be found in \cite{Leutheusser:2021frk}). The surface $r = 0$ corresponds to the boundary between $\mathcal{F}_1$ and $\mathcal{F}_2$. Therefore, the geodesic incompleteness stemming from SLC indicates the inability of the coordinate patch to cover the entirety of the Poincar\'{e} patch.

We will end this discussion with an important note. The inability to conclusively identify the terminus of null geodesics as a singularity is a general feature of all singularity theorems, not merely a limitation of our own.
\section{Discussion}
\label{Discussionsection}
Our proof of geodesic incompleteness is structurally similar to that of Wall \cite{Wall:2010jtc}. However, it is important to highlight the differences. The generalized entropy defined in \eqref{genentropyeq} consists of a classical area term and a quantum entanglement entropy term. The inclusion of both terms is essential for obtaining a second law \cite{Bekenstein:1972tm}.

In contrast, holographic complexity involves the computation of a classical volume. Quantum corrections become important only when the complexity is of exponential order in the Bekenstein-Hawking entropy. This is explicitly evident in 2d models of quantum gravity \cite{Yang:2018gdb,Iliesiu:2021ari}. The absence of a quantum term significantly simplifies the computations. It is well known in the literature that the entanglement entropy term complicates the comparison of arbitrary shape deformations of generalized entropy (see \cite{Bousso:2025xyc} for more details). Our volume calculations sidestep this technical challenge by remaining purely classical.

The growth rate of extremal surfaces has been very useful in signaling the existence of a singularity inside the black hole. A natural question is: what does this object mean in a quantum theory, for instance, in the language of operator growth \cite{Parker:2018yvk}? Since we are computing the volume of a \emph{subregion} inside the black hole, a natural guess would be to compute the complexity of a subsystem. This notion of subsystem complexity should be distinguished from existing ideas in the literature, such as the one in \cite{Alishahiha:2015rta}, where the volume enclosed by the Ryu-Takayanagi (RT) surface is used to define a notion of subregion complexity in the boundary theory. In contrast, the bulk subregion considered here lies within the black hole interior. It is therefore natural to expect that the associated subregion complexity computes the complexity of boundary operators that are dual to interior degrees of freedom. These operators are inherently non-local in the boundary theory and can, for example, be constructed using the Papadodimas-Raju proposal \cite{Papadodimas:2012aq,Papadodimas:2013wnh,Papadodimas:2013jku}.

If one can make this notion of subregion complexity more precise, we can use these ideas to diagnose the existence of singularities in their holographic duals. Recent developments in the study of operator algebras suggest that the emergence of a Killing horizon in the bulk can be attributed to a particular notion of chaos in the boundary theory \cite{Gesteau:2024rpt,Ouseph:2023juq}. Complexity might provide a natural framework for understanding the emergence of black hole singularities and the interior (see, e.g., \cite{Arean:2024pzo,Caceres:2024edr}). Operator complexity, for example, has been very useful in describing aspects of the exterior geometry of the black hole \cite{Fan:2024iop,Caputa:2024sux,Dodelson:2025rng,Jeong:2024oao}. Therefore, it is natural to expect complexity to teach us something about the emergence of the black hole interior.

An important takeaway from our calculations and that of Wall \cite{Wall:2010jtc} is that the black hole singularity is deeply tied to the thermodynamic irreversibility of the second law of thermodynamics. It will be interesting to see if we can use an appropriate notion of thermodynamic irreversibility to directly motivate a boundary diagnostic of black hole singularities.

In this paper, we have ignored quantum corrections to the volume calculation. It is known that non-perturbative effects play an important role in the computation of complexity \citeleft\citen{Iliesiu:2021ari}\citepunct\citen{Gautason:2025ryg}\citeright. These corrections are expected to appear at the semiclassical level as bulk `wormholes' and are believed to account for the finiteness of the black hole Hilbert space \cite{Balasubramanian:2022gmo}. Including such effects would modify the growth rate calculations and may offer a path toward resolving the singularity.

\section*{Acknowledgements} 
I would like to thank Friðrik Freyr Gautason, Watse Sybesma, and Lárus Thorlacius for useful discussions. This work was supported in part by the Icelandic Research Fund under grant 228952-053 and by a doctoral grant from The University of Iceland Science Park. 

\bibliographystyle{JHEP}
\bibliography{refs}

\end{document}